\newtheorem{mydef}{Definition}
\newtheorem{thm}{Theorem}
\begin{document}

\title{Privacy Preserving Recommendation System Based on Groups}
       \author{
\IEEEauthorblockN{Shang Shang\IEEEauthorrefmark{1},  Yuk Hui\IEEEauthorrefmark{2}, Pan
Hui\IEEEauthorrefmark{3},
Paul Cuff\IEEEauthorrefmark{1},  Sanjeev Kulkarni\IEEEauthorrefmark{1} }\\ \IEEEauthorblockA{\IEEEauthorrefmark{1}Department of Electrical Engineering, 
Princeton University \\ Princeton NJ, 08540, U.S.A.
} \\ \IEEEauthorblockA{\IEEEauthorrefmark{2}
 Centre for Digital Cultures, Leuphana University,  L\"{u}neburg, Germany}\\
\IEEEauthorblockA{\IEEEauthorrefmark{3}
Department of Computer Science, The Hong Kong University of Science and Technology, Hong Kong, China}\\
\IEEEauthorrefmark{1}{\{sshang, cuff, kulkarni\}@princeton.edu, \IEEEauthorrefmark{2}yuk.hui@inkubator.leuphana.de, \IEEEauthorrefmark{3}panhui@cse.ust.hk}
}


\maketitle
\begin{abstract}
Recommendation systems have received considerable attention in the recent decades. Yet with the development of information technology and social media, the risk in revealing private data to service providers has been a growing concern to more and more users.  Trade-offs between quality and privacy in recommendation systems naturally arise. In this paper, we present a privacy preserving recommendation framework based on groups. The main idea is to use groups as a natural middleware to preserve users' privacy. A distributed preference exchange algorithm is proposed to ensure the anonymity of data, wherein the effective size of the anonymity set asymptotically approaches the group size with time. We construct a hybrid collaborative filtering model based on Markov random walks to provide recommendations and predictions to group members. Experimental results on the MovieLens and Epinions datasets show that our proposed methods outperform the baseline methods, L+ and ItemRank, two state-of-the-art personalized recommendation algorithms, for both recommendation precision and hit rate despite the absence of personal preference information.
 
\end{abstract}

\begin{keywords}
 Recommendation system, privacy, group based social networks
 \end{keywords}

\section{Introduction}
With the recent development of social media, personalization and privacy preservation are often in tension with each other. Private companies such as Google and Facebook are accumulating and recording enormous personal data for the sake of personalization. Personalization provides users with conveniences. At the same time, it can have a direct impact on marketing, sales, and profit. Most recommendation systems focus on improving the performance of collaborative filtering (CF) techniques. Privacy, which is a serious concern for many users, is the price users have to pay for the convenience of recommendation systems in a world with booming information. Users normally have no choice but to trust the service provider to keep their sensitive personal profile safe. However, it is not always ``safe.'' For example, a shopping website one has visited once might keep appearing on the advertising block for days when browsing some other web pages. 

The starting point of our paper is to find a way out of the opposition between anonymity and personalization: how can we maintain a certain level of anonymity without sacrificing useful and accurate recommendations? We propose to do recommendations at a group level, instead of at the individual level. Group based social networks (for example, Diaspora, Crabgrass, Lorea, etc.) were originally conceived as alternatives for social networks such as Facebook, twitter, etc, and are gaining more and more users \cite{Yuk}. Also, group-based social networks have been thriving on the other side of the globe, notably Douban (as shown in Fig.\ref{douban}), a Chinese group-based social network focus on building interest groups around books, films, music, etc., has already more than 50 million users. The Douban example demonstrates that these group-based models are not simply of marginal interest. As privacy issues generating increasing concern, alternative designs such as group-based social networks may continue to emerge. This departure from individual based social networks to group based social networks inspired this study. We find that it is possible to give accurate recommendations based on groups while maintaining some privacy from the service provider.			

\begin{figure*}[!t]
\centering
\subfloat[An example of a group-based social network: douban.com. On the left of the webpage is shown information of a DIY group, and on the right is shown a list of new-coming group members and associated groups.]{\includegraphics[width=0.6\textwidth]{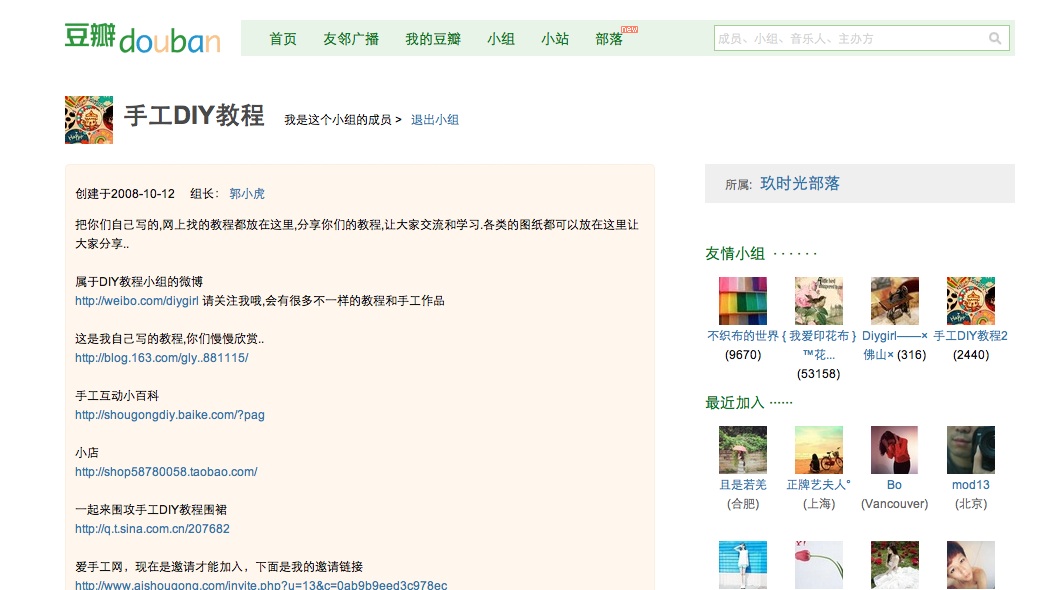}}\hfill
\subfloat[Structure of group-based social networks. Two groups are linked if they are associate groups.]{\includegraphics[width=0.35\textwidth]{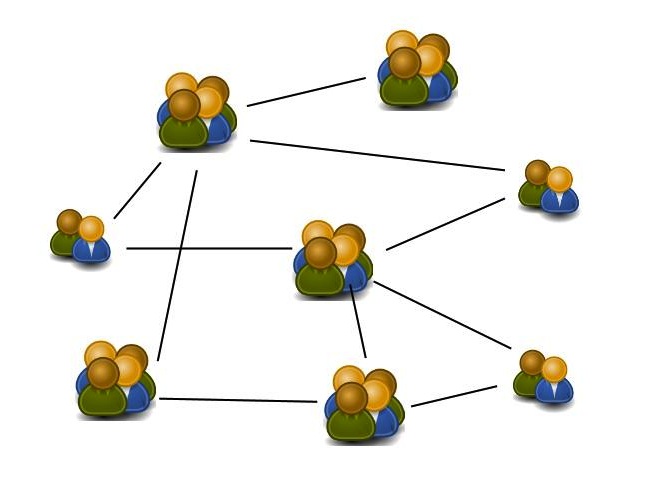}}
\caption{Group based social networks. }
\label{douban}
\end{figure*}

\subsection{Related Work}

Current approaches to protect privacy in recommendation systems mostly address two different privacy concerns: protecting users' privacy from curious peers or malicious users \cite{Ash, Frank}, and against unreliable service providers \cite{Aimeur, Canny, P3}. In order to make the outcome of recommendation insensitive to single input so as to protect users private preference data from other users, privacy preserving algorithms from the differential privacy literature are modified to provide privacy guarantees.  McSherry et al. \cite{Frank} adapted the leading approaches in the Netflix Prize competition to provide differential privacy and recommendations on movies. Machanavajjhala et al. \cite{Ash} studied recommendations based on a user's social network with differential privacy constraints. On the other side, in order to prevent a single party, e.g. the service provider, from gaining access to every user's data, cryptographic solutions are proposed in \cite{Aimeur,Canny}, however, cryptography could be computationally expensive, especially for end-users. Nandi et al. \cite{P3} proposed to preserve preference privacy from a single party by middleware, where computation and recommendation are performed locally. 

The focus of our work is to protect users from unreliable service providers, and to mitigate users' fear of potential intrusions of privacy by keeping a certain amount of anonymity. The curse of dimensionality and computational limitations of personal devices make deployment of \cite{P3} difficult. The idea of using groups as a natural protective mechanism is inspired by the French philosopher Gilbert Simondon \cite{Gilbert}.  An intriguing and interesting aspect of Simondon's theory of systems and technical objects is the idea of adopting an ``associated milieu"  into the operation of the system. This associated milieu can be natural resources.  For example, Simondon spoke of the Guimbal turbine (named after the engineer who invented it), which, to solve the problem of loss of energy and overheating, used oil to lubricate the engine and at the same time isolate it from water; it can then also integrate a river as the cooling agent of a turbine \cite{Gilbert}. The river here is the associated milieu for the technical system; it is part of the system rather than simply the environment.  Groups for us serve a similar function as an associated milieu, that contribute to the preservation of individual privacy, while still supporting the functioning of the social network. 

In this paper, we propose a framework for using groups as a natural middleware to recommend products to users. Our framework can be combined with other differentially private recommendation solutions such as \cite{Frank}. More specifically, we design a simple distributed protocol to preserve users' privacy through a peer-to-peer preference exchange process. The effective size of the anonymity set asymptotically approaches the size of the group as time approaches infinity.  After group opinion is aggregated, we construct a recommendation graph and use a random walk based method to make recommendations. The stable distribution resulting from a random walk on the graph is interpreted as a ranking of nodes for the purpose of prediction and recommendation. Personalized recommendation is only performed locally so that no private information is revealed to the service provider. We evaluate the performance of the proposed algorithm using the MovieLens  and Epinions \cite{Massa} dataset, and we compare the results with recommendation algorithms designed for individual users. 

\subsection{Contributions}
A summary of the contributions of this paper is as follows:
\begin{itemize}
\item We propose a recommendation system using groups as a natural protective mechanism for privacy preservation.  To the best of our knowledge, this is the first work to incorporate group-based social networks in recommendation systems for the purpose of protecting users' privacy. 
\item A distributed peer-to-peer preference exchange protocol is designed to guarantee anonymity. We use random walks and mixing time of Markov chains to analyze the evolution of effective size of the anonymity set with time. 
\item We suggest a novel method for intra-group preference aggregation. We propose a heuristic method based on strong connected component detection to compute Kemeny-Young ranking \cite{Kemeny}. A popularity factor is introduced to balance the quality and popularity of the ranking result.
\item We introduce a random walk based hybrid collaborative filtering graph model that incorporates group based social network information for recommendations. Experiments are designed on the MovieLens dataset to evaluate the performance of the proposed recommendation system. 
\end{itemize}

The remainder of the paper is organized as follows. We formulate the recommendation problem in Section \ref{ps}. We then introduce the group-based recommendation system in Section \ref{main}. The performance of the proposed framework is evaluated in Section \ref{experiment}, followed by conclusions in Section \ref{conclusion}. 

\begin{figure*}[!t]
\centering
\centerline{\includegraphics[width=13cm]{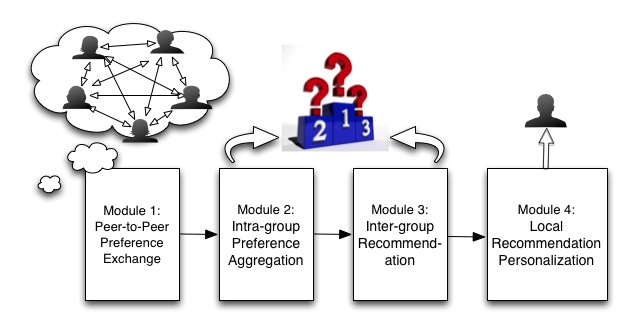}}
\caption{Modules in privacy preserving group-based recommendation system.}
\label{fig:module}
\end{figure*}

\section{Problem Statement}
\label{ps}
In a typical setting, there is a list of $m$ users $\mathcal{U} = \{u_1, u_2, ..., u_m\}$, and a list of $n$ items $\mathcal{I} = \{i_1, i_2, ..., i_n\}$.  Each user $u_j$ has a list of items $I_{u_j}$, which the user has rated or from which user preferences can be inferred.  The ratings can either be explicit, for example, on a 1-5 scale as in Netflix, or implicit such as purchases or clicks. This information is stored locally. In a group-based social network, the basic atoms are groups instead of individuals. $\mathcal{G} = \{g_1, g_2, ..., g_k\}$ is a list of $k$ groups. $\mathcal{S} = \{\mathcal{G}, \mathcal{E}_s\}$ is a group-based social network,  containing social network information, represented by an undirected or directed graph.  $\mathcal{G}$ is a set of nodes and $\mathcal{E}_s$ is a set of edges.  For all $u, v$, $(u, v) \in \mathcal{E}_s$ if $v$ is an associated group of $u$. Let $\mathcal{T} = \{t_1, t_2, ..., t_y\}$ be a set of tagging information for the items. For example, for movies, $\mathcal{T}$ can be genre, main actor, release date, etc.  $T_i \in \{0,1\}^y$ denotes the features of item $i$, where $y$ is the total number of tags. We want to make a recommendation to a group of members while no individual preference information is revealed to the central server. 

\section{Group-based Privacy Preserving Recommendation System}
\label{main}
The structure of the recommendation system is shown in Fig. \ref{fig:module}:
\begin{itemize}
\item \textbf{Module 1}: Peer-to-peer preference exchange. Users exchange preference information with other group members in a distributed manner. Only the exchanged information is then uploaded to the central node, thus the individual preferences are kept private. 
\item \textbf{Module 2}: Intra-group preference aggregation. The central server aggregates group preferences to minimize the disagreement heuristically.  The group preference will serve as an input for inter-group recommendation and prediction.
\item \textbf{Module 3}: Inter-group recommendation. A recommendation graph is constructed.  A random walk based algorithm is performed for recommendations.
\item \textbf{Module 4}: Local recommendation personalization. The top $k$ recommendations are returned to group members. Personalized recommendation are computed locally.
\end{itemize} 

In the rest of this section, we describe and analyze the system in detail.

\subsection{Peer-to-peer Preference Exchange}
\label{p2p}
Preference exchange is a process to mix individual preferences so that no full rating profile is collected by the recommendation service provider. Some of the benefits of our preference exchange scheme could be obtained by anonymous communications such as \emph{The Onion Router} \cite{tor}. Users could use persistent pseudo-identities and make anonymous ratings, either directly on the central server or let a trustful third party collect this information. However, pseudo-identities still expose users to privacy risks unless the user data is further protected \cite{Canny}. Our proposed peer-to-peer preference exchange procedure lets users exchange information within the group in a distributed manner. Only the aggregated preferences are sent to the central server. In a group based social network, such as Douban, group members are maintained by group masters,  thus we assume that users within the group are trustful and uncorrupted. Otherwise, techniques of fake accounts and malicious users detection in social networks can be used \cite{Stringhini}\cite{Yu}.  Note that the proposed P2P procedure also protects users preference information among peers, since this is beyond the scope of this work, we do not measure the privacy guarantee among users quantitatively. 

In the rest of Section \ref{p2p}, we describe our peer-to-peer preference exchange scheme in detail and analytically give the privacy guarantee towards the service provider. 

\subsubsection{Pairwise Comparison Matrix}
Before sending preference information to the server, group users exchange information with other group members distributedly. Users then upload the mixed information.  Suppose every user has a partial ranking on $\mathcal{I}$. Each user keeps an $n \times n$ pairwise comparison matrix $M$ locally. $M_{xy}^{(u)} = 1$ if user $u$ considers $x$ is better than $y$; $M_{xy}^{(u)} = 0$ if otherwise, including when no comparison is made between $x$ and $y$ or they are equally liked. When the preference information is $p$-rating records, i.e. users rate products by the scale of 1 to $p$, we can transform $p$-rating history into a partial rank. Let $r^{(u)}_x$ denote the rating of user $u$ on item $x$.  

\begin{itemize}
\item If $r^{(u)}_x > r^{(u)}_y$, $M_{xy}^{(u)} = 1$, and $M_{yx}^{(u)} = 0$.
\item If $r^{(u)}_x = r^{(u)}_y$, $M_{xy}^{(u)} = 0$, and $M_{yx}^{(u)} = 0$. 
\end{itemize}

\subsubsection{Pre-exchange Preparation}
Although our focus is to prevent the central server from collecting individual preference, the proposed P2P preference exchange scheme also protects users preference information from other group members. Before the preference exchange starts, each user $u$ randomly chooses $p$ pairwise comparison pairs $x,y$ with $M_{xy}^{(u)} = M_{yx}^{(u)} = 0$, and changes it to  $M_{xy}^{(u)} = M_{yx}^{(u)} = 1$, where
\begin{equation}
p = \frac{1}{2}\left( \frac{1}{2}n(n-1) - \sum_{i,j}\mathbf{1}_{\{M_{ij}^{(u)} + M_{ji}^{(u)} = 1\}}\right),
\end{equation}
i.e. after inserting some 1s in the pairwise comparison matrix, there are an equal number of 0s and 1s among all entries in the matrix. 

\subsubsection{Preference Exchange Rules}
\label{rules}
Although in a group-based social network, a user can belong to multiple groups, in the recommender system, each user only subscribes to one group for recommendations (If assigning users to multiple groups for recommendations, trivial changes are needed, e.g. preference aggregation on the recommendation results from multiple groups). Consider a group $g_i$ of $N$ members.  Group members form a network of $N$ nodes, labeled $1$ through $N$, which form a complete graph.  As in some distributed systems \cite{Benezit}\cite{Boyd}, each node has a clock which ticks according to a rate 1 exponential distribution.  In addition, a synchronized clock is also present at each node.  

The preference exchange phase is a process to mix individual preferences so that users do not upload anyone's full rating profile but the mixed preference of the group. The only requirement for the preference exchange is sum conservation. When a user $u$'s  local Poisson clock ticks, $u$ randomly picks another user $v$ in the same group, and randomly picks an entry in the pairwise comparison matrix $M_{xy}$ to exchange the corresponding pairwise comparison matrix entry with $v$.

This phase ends at synchronized time $t = T_{th}$. All nodes then check all pairwise comparisons: if $M_{xy} = M_{yx} = 1$, reset both entries to be 0, i.e. make $M_{xy} = M_{yx} = 0$. Then upload their current preference information to the central server. Because the information uploaded is a mixed preference, individual preference information is not provided and user privacy is protected. 

\emph{Remark}: Note that in the pre-exchange stage, changing pairwise comparison entries from 0 to 1 does not change the individual preference profile, but only to protect user's privacy from revealing to peers in the preference exchange process. 

\subsubsection{Anonymity Analysis}
\label{anonymity}
\begin{mydef}
Anonymity is the state of being not identifiable within a set of subjects, which is called the anonymity set \cite{Pfitzmann}.
\end{mydef}
One popular measurement is the notion of an \emph{anonymity set}, which was introduced for the dining cryptographers problem \cite{chaum}.  However, a rating record does not necessarily arise with equal probability from each of the group members, and so the size of the group is not necessarily a good indicator of anonymity.  Instead, we adopt an information theoretic metric for anonymity proposed in \cite{Andrei}:

\begin{mydef}
Define the effective size $\mathcal{A}$ of an anonymity probability distribution as, 
\begin{equation}
\mathcal{A} = 2^{\sum_{u \in g_i} -p_u\log_2 p_u}
\end{equation}
where $p_u$ is the probability that a rating record is from user $u$. 
\end{mydef}

In order to find the probability distribution of a certain rating record, we first analyze the random process of preference exchange. Because of the superposition property of the exponential distribution, the setup is equivalent to a single global clock with a rate $N$ exponential distribution ticking at times $\{Z_k\}_{k \ge 0}$.  The communication and exchange of preferences occurs only at $\{Z_k\}_{k \ge 0}$. 

\begin{mydef}
A random walk is a Markov process with random variables $X_1, X_2, ..., X_t, ...$ such that the next state only depends on the current state. For a random walk on a weighted graph, $X_{t+1}$ is a vertex chosen according to the following probability distribution:
 \begin{equation}
P_{ij} := P(X_{t+1} = j|X_t = i) = \frac{p_{ij}}{\sum_{j \in \mathcal{N}_i}p_{ij}},
\end{equation}
where $\mathcal{N}_i$ are the neighbors of $i$,  
$\mathcal{N}_i := \{j  | (i,j) \in \mathcal{E}\} $, and $p_{ij}$ is the weight of the edge joining node $i$ to node $j$.
\end{mydef}

Define a \emph{natural random walk $\mathcal{X}_N$} with transition matrix $P^{N} = (P_{ij})$:
\begin{itemize}
\item $P^N_{ii}=1-\frac{1}{n'N}$ for $\forall i \in \mathcal{V}$,
\item $P^N_{ij}=\frac{1}{n'N|\mathcal{N}_i|}$ for $(i, j) \in \mathcal{E}$,
\end{itemize}
where $n' $ is the number of entries exchanged in the pairwise comparison matrix, i.e., $n' = n(n-1)$, $n$ is the number of items, and $N$ is the number of members in the group.

\begin{thm}
The effective size of the anonymity set of any preference record $\mathcal{A}$ approaches the group size $N$ asymptotically with time, i.e.
\begin{equation}
\label{equ:ano2}
\lim _{t\rightarrow \infty} \mathcal{A}(t) =  N.
\end{equation} 
\end{thm}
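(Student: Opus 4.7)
The plan is to view every off-diagonal entry of the pairwise comparison matrix as a ``token'' that moves among the $N$ group members through the swap dynamics, and to show that in the limit this token is uniformly distributed over users. Once we establish that, any uploaded value is equally likely to have originated from any of the $N$ members, so $p_u \to 1/N$ and the entropy saturates at $\log_2 N$.

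\textbf{Step 1: Identify the random walk.} Fix a pair of items $(x,y)$ and a tagged originator $u_0$; let $L_t^{(x,y)} \in \{1,\dots,N\}$ be the user who, at time $t$, is holding the value that started out as $M_{xy}^{(u_0)}$. Under the exchange rule of Section \ref{rules}, the token at $L_t^{(x,y)}$ moves only when a Poisson tick selects a pair $(u,v)$ with the current holder as one endpoint \emph{and} selects the entry $(x,y)$ out of the $n'$ available entries; in that case it jumps to the other endpoint, which is uniformly distributed over $\{1,\dots,N\}\setminus\{L_t^{(x,y)}\}$. Embedded in the global rate-$N$ clock, this is exactly the natural random walk $\mathcal{X}_N$ on the complete graph $K_N$ with transition matrix $P^N$ defined in the excerpt.

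\textbf{Step 2: Convergence to the uniform distribution.} Because $K_N$ is vertex-transitive and the off-diagonal transition probability $1/(n'N(N-1))$ does not depend on the endpoints, $\pi_u = 1/N$ is stationary for $P^N$. The chain is irreducible (since $K_N$ is connected) and aperiodic (since $P^N_{ii} = 1 - 1/(n'N) > 0$), so the standard finite-state convergence theorem gives
\begin{equation}
\Pr\bigl(L_t^{(x,y)} = u\bigr) \longrightarrow \frac{1}{N} \quad \text{as } t \to \infty,
\end{equation}
uniformly in $u \in \{1,\dots,N\}$.

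\textbf{Step 3: Translate mixing into anonymity.} The quantity $p_u$ in the definition of $\mathcal{A}$ is precisely $\Pr(L_t^{(x,y)} = u)$ for the rating record under observation, so $p_u(t) \to 1/N$ for every $u$. Shannon entropy is continuous in the probability vector on the simplex, hence $\sum_u -p_u(t) \log_2 p_u(t) \to \log_2 N$, and therefore $\mathcal{A}(t) = 2^{\sum_u -p_u(t) \log_2 p_u(t)} \to 2^{\log_2 N} = N$, which is the claim.

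The main obstacle is Step 1, namely verifying that none of the auxiliary operations in the protocol destroys the symmetry among users. The pre-exchange 1-insertion is performed by each user independently on their own matrix and so does not distinguish between users in the source distribution; the terminal reset of pairs with $M_{xy}=M_{yx}=1$ is a symmetric function of the two entries and does not reveal which user originally contributed the token. Once one is convinced that the per-entry ``who holds it'' chain is genuinely $\mathcal{X}_N$, the rest is a short appeal to Markov chain mixing plus continuity of entropy.
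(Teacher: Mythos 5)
Your proof takes essentially the same route as the paper's: tag a pairwise-comparison record, track which user holds it as a Markov chain on the complete graph, show its distribution converges to the uniform distribution, and substitute into the entropy; the paper merely establishes the convergence by explicit eigendecomposition of the transition matrix (eigenvalue $1$ once and $1 - \frac{2}{n'(N-1)}$ with multiplicity $N-1$) where you invoke the standard irreducible--aperiodic convergence theorem, and both are valid. One small quantitative correction: a swap between the current holder and a partner $v$ can be triggered by \emph{either} party's Poisson clock, so the marginal off-diagonal transition probability is $\frac{2}{n'N(N-1)}$ (the paper's ``biased'' chain $P = P^1 + P^2$), not the $\frac{1}{n'N(N-1)}$ of $\mathcal{X}_N$ that you cite; this doubles the mixing speed but leaves symmetry, irreducibility, aperiodicity, and the uniform limit untouched, so your conclusion is unaffected.
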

\begin{proof}
In this random process, there are two sources stimulating the random walk from $i$ to $j$, $\forall(i,j)\in \mathcal{E}$: one is the clock of the node $i$, $P^1_{ij} = P^N_{ij}$; the other one is the clock of its neighbor $j$, $P^2_{ij} = P^N_{ji}$. Thus $P_{ij} = P^1_{ij} + P^2_{ij}$, i.e., each pairwise comparison record $\alpha$ in a node takes a \emph{biased random walk} on a complete graph, with marginal transition matrix $P = (P_{ij})$:
\begin{itemize}
\item $P_{ii}: =1-\frac{2}{N}\frac{1}{n'} $ for $\forall i \in \mathcal{V}$,
\item $P_{ij}: =\frac{1}{n'}\frac{1}{N}\frac{2}{N-1}$ for $i \ne j$,
\end{itemize}  

Hence at time $t$, the probability distribution $\mathbf{P}_t{(i)}$ of a certain  record $\alpha$ starting from node $i$ is 
\begin{equation}
\mathbf{P}_t(i) = P^t \cdot e_i,  
\end{equation}
where $e_i$ is a unit vector with value 1 on its $i$th entry, and $P$ is a symmetric stochastic matrix,
\begin{eqnarray}
\nonumber 
P = \left(\begin{array}{cccc}1-\frac{2}{N}\frac{1}{n'} & \frac{1}{n'}\frac{1}{N}\frac{2}{N-1} & \cdots & \frac{1}{n'}\frac{1}{N}\frac{2}{N-1} \\\frac{1}{n'}\frac{1}{N}\frac{2}{N-1} & 1-\frac{2}{N}\frac{1}{n'} & \cdots & \frac{1}{n'}\frac{1}{N}\frac{2}{N-1} \\\vdots & \vdots & \ddots & \vdots \\\frac{1}{n'}\frac{1}{N}\frac{2}{N-1} & \frac{1}{n'}\frac{1}{N}\frac{2}{N-1} & \cdots & 1-\frac{2}{N}\frac{1}{n'}\end{array}\right), \\
\end{eqnarray}
with eigenvalues 
\begin{equation}
\lambda_1 \ge \lambda_2 \ge \cdots \ge \lambda_N.
\end{equation}
It is a basic property of eigenvalues that the sum of all eigenvalues, including multiplicities, is equal to the trace of the matrix. It is easy to check that 
\begin{equation}
\lambda_1 = 1,
\end{equation}
\begin{equation}\label{eq:lambda}
\lambda_2  = \cdots = \lambda_N = 1 - \frac{2}{n'(N-1)}.
\end{equation}

We can express $P$ as
\begin{equation}
P = \sum_{i = 1}^N \lambda_i\mathbf{v}^T_i\mathbf{v}_i, 
\end{equation}
where the row eigenvectors $\mathbf{v}_i$ are unitary and orthogonal.
Specifically, 
\begin{equation}
\mathbf{v}_1 = (\frac{1}{\sqrt{N}}, ..., \frac{1}{\sqrt{N}}).
\end{equation}

We thus have
\begin{equation}
P^t = \sum_{i = 1}^N\lambda_i^t\mathbf{v}^T_i\mathbf{v}_i. 
\end{equation}

Notice that
\begin{equation}
\lambda_1\mathbf{v}^T_1\mathbf{v}_1 = \lambda_1^k\mathbf{v}^T_1\mathbf{v}_1 = \frac{1}{N}\mathbf{1}\mathbf{1}^T.
\end{equation}

Hence
\begin{equation}\label{eq:p}
P = \frac{1}{N}\mathbf{1}\mathbf{1}^T + \sum_{i = 2}^N \lambda_i\mathbf{v}^T_i\mathbf{v}_i. 
\end{equation}

From (\ref{eq:lambda}) to (\ref{eq:p}), we have 
\begin{eqnarray}\label{eqn:pk}
\nonumber 
P^t &=& \frac{1}{N}\mathbf{1}\mathbf{1}^T + \left(1 - \frac{2}{n'(N-1)}\right)^{t-1} \\\nonumber&\cdot&
\left(\begin{array}{cccc}1-\frac{2}{N}\frac{1}{n'} - \frac{1}{N} & \frac{1}{n'}\frac{1}{N}\frac{2}{N-1} - \frac{1}{N} & \cdots & \frac{1}{n'}\frac{1}{N}\frac{2}{N-1} - \frac{1}{N}\\\frac{1}{n'}\frac{1}{N}\frac{2}{N-1} - \frac{1}{N}& 1-\frac{2}{N}\frac{1}{n'} - \frac{1}{N} & \cdots & \frac{1}{n'}\frac{1}{N}\frac{2}{N-1} - \frac{1}{N}\\\vdots & \vdots & \ddots & \vdots \\\frac{1}{n'}\frac{1}{N}\frac{2}{N-1} - \frac{1}{N}& \frac{1}{n'}\frac{1}{N}\frac{2}{N-1} - \frac{1}{N}& \cdots & 1-\frac{2}{N}\frac{1}{n'} - \frac{1}{N}\end{array}\right).\\
\end{eqnarray}

As $t \rightarrow \infty$, each rating record $\alpha$ shows up at each node with equal probability, i.e. 
\begin{equation}
\lim _{t\rightarrow \infty} \mathbf{P}_t(i) = \frac{1}{N}\mathbf{1},
\end{equation}
for $\forall i \in \{1, 2, ..., N\}$.

Then the effective size $\mathcal{A}$ of the anonymity distribution for $\alpha$ is 
\begin{equation}
\label{equ:ano1}
\mathcal{A}(t) = 2^{- \sum _{u \in g_i} p_u(t)\log_2(p_u(t))},
\end{equation}
where $p_u(t)$ is the $u^{th}$ element in $\mathbf{P}_t(i)$. 

Moreover, we have
\begin{equation}
\label{equ:ano2}
\lim _{t\rightarrow \infty} \mathcal{A}(t) =  N.
\end{equation} 
\end{proof}

\subsection{Intra-group preference aggregation}
\label{sec:KY}
While preference aggregation has been studied extensively in the context of social choice, even the basic problem of arriving at an aggregated ranking is difficult.  One challenge is to balance the popularity (e.g., rank items according to the number of rating records) and quality (e.g., rank according to average rating). In this recommendation system, we propose to use Kemeny ranking \cite{Kemeny} as the aggregated group preference, which is a ranking that minimizes the disagreement among group members. In the rest of Section \ref{sec:KY}, we first give the definition of Kemeny top-$k$ rank, followed by a suggested heuristic method for rank aggregation.

\subsubsection{Problem Formulation}
Suppose every member has a preference profile $\pi_i$ (full ranking or partial ranking). In the recommendation system, we focus on the top-$k$ rank $\pi^k$, which is a partial rank consisting of the $k$ most popular alternatives. One way to define top-$k$ rank is that a partial rank contains $k$ items which minimizes the disagreement with all individual user's preferences, as explicitly formulated below:
\begin{equation}
\label{eq:target}
\begin{aligned}
& \underset{\pi^k}{\text{minimize}}
& &\sum_{i = 1} ^ {|g_j|} K(\pi^{k}, \pi_i) \\
\end{aligned}
\end{equation}

$ K(\pi^{k}, \pi_i)$ is the \emph{Kendall tau distance} \cite{Kendall}, defined by the number of disagreement of pairwise comparisons between two (partial) ranks. More specifically, 

\begin{equation}
K(\pi_1,\pi_2) = |\{(i,j): i < j, ( \pi_1(i) < \pi_1(j) \wedge \pi_2(i) > \pi_2(j) ) \vee  ( \pi_1(i) > \pi_1(j)  \wedge  \pi_2(i) < \pi_2(j) )\}|
\end{equation}
  
If $k$ is the size of the items, i.e. $k = n$ and $\pi^{k}$ satisfies (\ref{eq:target}), $\pi^{k}$ is called a \emph{Kemeny ranking} \cite{Kemeny}.  For example, suppose $\pi_1 = \{1,2,3\}$, $\pi_2 = \{2,1,3\}$, $\pi_3 = \{3,2,1\}$, with the pairwise comparison graph shown in Fig. \ref{fig:tri}. $K(\pi_1, \pi_2) = 1$, $K(\pi_1, \pi_3) = 2$, and the Kemeny Ranking is $\pi^3 = \{1, 2, 3\}$. Finding a Kemeny ranking is equivalent to a \emph{minimum feedback arc set} problem \cite{Karp}.

In our recommendation system, the mixed preferences are recorded in the form of pairwise comparisons. For a group $g_j$, let $\mathcal{M}^{(j)} = \sum_{i \in g_j} M^{(i)}$. We can construct a direct weighted graph $G^{(j)} = \{\mathcal{I}, E^{(j)}\}$. $(x,y) \in E^{(j)}$ if $\mathcal{M}_{xy}^{(j)} - \mathcal{M}_{yx}^{(j)}> 0$, and $w_{xy}^{(j)} =  \mathcal{M}_{xy}^{(j)} - \mathcal{M}_{yx}^{(j)}$ i.e., if more group members in $g_j$ prefer $x$ to $y$. The weight of the edge is the corresponding difference of matrix entries. In order to find the top-$k$ list $\pi ^k$ satisfying (\ref{eq:target}), we need to reverse a set of edges, the sum of which is minimal so that we can do the topological sort on the graph for the first $k$ nodes. Partial rank aggregation is known to be NP-hard \cite{Ailon}. 

\begin{figure*}[!t]
\centering
\centerline{\includegraphics[width=6cm]{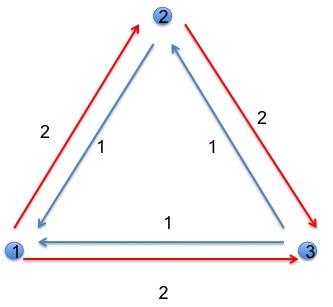}}
\caption{The pairwise comparison graph for $\pi_1 = \{1,2,3\}$, $\pi_2 = \{2,1,3\}$, $\pi_3 = \{3,2,1\}$.}
\label{fig:tri}
\end{figure*}


\subsubsection{Heuristic Rank Aggregation}
\label{sec:heuristic}

We now propose an efficient heuristic method for intra-group preference aggregation for top-$k$ items. As mentioned in the last section, if we can do topological sort in the partial rank graph for the first $k$ nodes, we then have the top-$k$ list of the group preference. We modify Tarjan's \emph{strongly connected components} (SCC) algorithm \cite{Tarjan} to find the top-$k$ list in linear time if the size of the top SCC is small compared to the size of item list $\mathcal{I}$. Since Tarjan's algorithm returns SCCs in reverse topological order, we first create the graph $G'$, the transpose graph of $G$. Let $c$ be the counter of nodes contained in the current SCC. Detection for SCCs stops when $c \ge k$. Let $\beta$ denote the maximum size of SCC popped so far. Considering the large number of items in a recommendation system, we set a threshold $\theta_{scc}$: if $\beta \ge \theta_{scc}$, a heuristic method is used to find $\pi ^k$; otherwise we compute the exact result. $k \ll \theta_{scc} \ll n$. 

In reality, the assumption that all items are equally likely to be rated may not hold. Let us define the popularity of an item $\gamma(i)$ as the percentage of users who rated item $i$. In order to balance popularity and quality, let $\theta_p$ denote the popularity threshold. An item will not be included in the top-$k$ list if $\gamma(i)  <  \theta_p$.

A summary of the algorithm is shown in Algorithm \ref{ag:all}:

\begin{algorithm}[H]
 $G' \leftarrow G^T$\;  
	\{create a graph $G'$, which is a transpose graph of $G$\}\;
$c \leftarrow 0$, $\beta \leftarrow 0$\;
\While{$c < k$}{
	TarjanSCC\;
	\{update $c$ and $\beta$ after every SCC is popped\}\;
	}
\eIf {$\beta < \theta_{scc}$}{
	topk $\leftarrow$ Kemeny\;
	}
	{
	 topk $\leftarrow$ HeuristicKemeny\;
	}
\Return topk\;
\caption{Algorithm sketch for intra-group preference aggregation.}
\label{ag:all}
\end{algorithm}

We use a modified version of TarjanSCC from \cite{Tarjan} in order to update $c$ and $\beta$. The modified SCC detection algorithm is summarized in Algorithms \ref{ag:scc1} and \ref{ag:scc2}.

\begin{algorithm}[H]
index $\leftarrow 0$\;
empty stack $S$\;
\For {$v$}{ 
	\If{$v$.index is undefined}{
		SCC($v$)\;
	}
}
\caption{SCC detection: TarjanSCC}
\label{ag:scc1}
\end{algorithm}

The function SCC recursively explores the connected nodes in the SCC, as shown in Algorithm \ref{ag:scc2}.

Much work has been done on heuristic methods for computing optimal \emph{Kendall tau distance} (Kemeny-Young method) \cite{Ailon}\cite{Lin}\cite{Claire}\cite{Saab}. In the experiments in Section \ref{experiment}, we use Borda count algorithm for HeuristicKemeny. Borda count is a 5-approximation of the Kemeny-Young method, and is often computational effective in practice \cite{Claire}. In a rating based system, the Borda count result can be calculated by adding up the rating scores of the item. However, other heuristic methods can also be integrated easily in the proposed framework. We do not discuss these methods further since it is out of the scope of this paper. 

It is easy to see that \emph{TarjanSCC} runs in linear time as a function of the number of edges and nodes because it is based on depth-first search. Borda counts runs in linear time as a function of the number of items, i.e. $O(|V|)$. We assume $k \ll \theta_{scc} \ll n$, and hence the proposed heuristic method runs in linear time in $O(|E|+|V|)$.

\begin{algorithm}[H]
$v$.index $\leftarrow$ index\;
$v$.root $\leftarrow$ index\;
index $\leftarrow$ index $+ 1$\;
$S$.push($v$) \;
 \For {$(v, w) \in $ edges of $G'$}{
 	\If{w.index is undefined}{
		 SCC$(w)$\;
		$v$.root $\leftarrow \min(\rm{v.root, w.root})$\;
	}
	\If{ $w \in $current\_$s$}{
		$v$.root $\leftarrow \min(\rm{v.root, w.index}))$\;
	}
}
 \If{$v$.root = $v$.index}{
	 empty stack current\_$s$\;
	\Repeat	{$u = v$}	{
	 $u \leftarrow S.\rm{pop}()$\;
	\If{popularity(u) > $\theta_p$}{
		current\_$s$.push$(u)$\;
	}
	}	

	 output current\_$s$\;
	 $c \leftarrow c + $current\_$s\rm{.size}()$\;
	\If{current\_$s.size() > \beta$}{
		$\beta \leftarrow $ current\_$s\rm{.size}()$\;
	}
	\If{$c > k$}{
		exit\;
	}
	}
 \caption{Function SCC}
 \label{ag:scc2}
 \end{algorithm}

\subsection{Inter-group Recommendation}

Intra-group preference aggregation described above gathers existing preference information from group members. However, it is desirable to recommend new items that have similar features but that have not yet been rated by group members. Studies show that two individuals connected via a social relationship tend to have similar tastes, which is known as the ``homophily principle" \cite{He}. With the absence of individual preference records, a group preference can serve as a natural middleware to help make recommendation decisions while protecting the privacy of users.

An intuitive approach is collaborative filtering (CF) \cite{Bell}\cite{Su}\cite{Vucetic}. Collaborative filtering is one of the most successful approaches to building a recommendation system. It uses the known preferences of users to make recommendations or predictions to a target user \cite{Su}. Weighted sum is typically used to make predictions.

In CF, a generally adopted similarity measure is called \emph{Pearson Correlation} which measures the extent to which two variables linearly relate with each other \cite{Resnick}. For user-based algorithms, the Pearson Correlation between user $u$ and $v$ is
\begin{equation}
\label{}
\displaystyle
w_{u,v} = \frac{\sum_{i \in I}(r_{u,i} -\bar{r}_u)(r_{v,i} -\bar{r}_v)}{\sqrt{\sum_{i \in I}(r_{u,i}-\bar{r}_u)^2}\sqrt{\sum_{i \in I}(r_{v,i}-\bar{r}_v)^2}},
\end{equation}
where $i \in I$ is an item rated by both users $u$ and $v$, $r_{u,i}$ is the rating of user $u$ on item $i$, and $\bar{r}_u$ is the average rating of user $u$ in the co-rating set $I$.  A weighted sum is then taken to predict the rating for target user $u$ on a certain item $i$ \cite{Resnick}
\begin{equation}
\label{}
\displaystyle
R_{u,i}= \bar{r}_u + \frac{\sum_{v \in \mathcal{U}}(r_{v,i} - \bar{r}_v) \cdot w_{u,v}}{\sum_{v \in \mathcal{U}}|w_{u,v}|}. 
\end{equation}

Recommenders based on collaborative filtering then refer to this prediction to provide the top-$k$ recommendations to the user. For our group-based recommendation, we can treat the groups as users in the equations above, and use the aggregated group preference as the rating history.  In this way, a group recommendation could be made.

However, traditional collaborative filtering methods are challenged by problems such as \emph{cold start} and \emph{data sparsity}. In the case of a group based recommendation system, these problems are inevitable, especially since groups in a social network already form natural clusters.  Hence, there may not be many co-rated items between different groups for the Pearson Correlation computation.

In order to overcome the disadvantages of collaborative filtering, we propose a random walk based inter-group recommendation system, which is an extension of our previous work in \cite{Shang}. Our model incorporates content information of items and social information of groups together as group preference information. It is shown in \cite{Kleinberg} that a random walk approach is very effective in link prediction on social networks. Inspired by \cite{Page} and \cite{Kleinberg}, we create a recommendation graph, as shown in Fig. \ref{fig:model}, consisting of items, groups, and item genres as nodes. Similar to PageRank, the stable distribution resulting from a random walk on the recommendation graph is interpreted as a ranking of the nodes for the purpose of recommendation and prediction. We describe how to construct this recommendation graph and represent the flow on the graph in the rest of this section.

\begin{figure}[!t]
\centering
\centerline{\includegraphics[width=7cm]{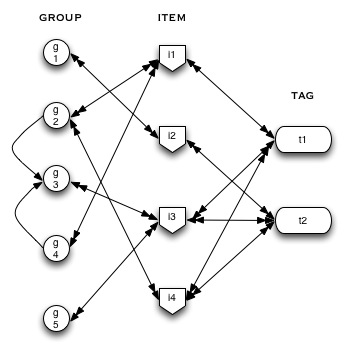}}
\caption{Example of a recommendation graph for inter-group recommendations.}
\label{fig:model}
\end{figure}

\subsubsection{Graph settings}
Let $G = \{\mathcal{V}, \mathcal{E}\}$ be a graph model for a recommendation system, where $\mathcal{V}:= \mathcal{G}\cup \mathcal{I} \cup \mathcal{T}$. The nodes of the graph consist of groups, items and item information. For $v_i, v_j \in \mathcal{V}$, $(v_i, v_j)\in \mathcal{E}$ if and only if there is an edge from $v_i$ to $v_j$, which is determined as given below. The weights are specified in the next subsection.
\begin{itemize}
\item For $g\in \mathcal{G}, i \in {\mathcal{I}}$, $(g, i) \in \mathcal{E}$ and $(i, g) \in \mathcal{E}$ if and only if $i \in \pi^{k}(g)$.  i.e., an item $i$ and a group $u$ are connected with weights $w_{gi}$ and $w_{ig}$ if $i$ is in $g$'s top-$k$ list. 
\item For $i \in \mathcal{I}, t\in \mathcal{T}$, $(i, t)\in \mathcal{E}$ and $(t, i)\in \mathcal{E}$ if and only if $T_{i}^{(t)} \ne 0$.  i.e., an item $i$ and tag $t$ are connected with weights $w_{it}$ and $w_{ti}$ if $i$ is tagged by $t$.
\item For $g_1, g_2 \in \mathcal{G}$, $(g_1, g_2)\in \mathcal{E}$ with weight $w_{g_1g_2}$ if and only if $g_1, g_2$ are associated groups, i.e. $(g_1, g_2)\in \mathcal{E}_s$, as mentioned in Section \ref{ps}. 
\end{itemize}

\subsubsection{Edge weight assignment}
\label{sec:weight}
The main part of our rank graph is the collaborative filtering graph, which includes the group nodes, item nodes, and the edges between them. One way to assign weights on the collaborative filtering graph is by setting
\begin{equation}
w_{gi} = w_{ig} = \frac{k + 1 - \pi_g^k(i)}{k}w_{\max},
\label{eq:weight}
\end{equation}
where $\pi_g^k(i)$ is the rank of item $i$ in the top-$k$ list of group $g$, and $w_{\max}$ is the max weight assigned on the graph. Let $\pi_g^k(i) = k+1$ if $i \notin {\pi_g^k}$. 

Note that a larger edge weight indicates greater chance that the random walk passes through that edge. An item $i$ with better rank in $\pi_g^k(i)$ results in larger weights on edges involving $i$. 

For the extended graph, i.e. nodes and edges containing item content, group social network information, etc., we simply assign an edge weight of 1 if an edge is present.

 \subsubsection{Rank Score Computation}
 \label{sec: RSC}
For the recommendation graph $G = \{\mathcal{V}, \mathcal{E}\}$. Let $v = |\mathcal{V}|$ denote the number of nodes on the graph.  $m$ is a $v\times 1$ \emph{customized probability vector}. 
\begin{equation}
\theta = e_g,
\end{equation}
where $e_1, e_2, ..., e_v$ are the standard basis of column vectors. $\beta$ is a \emph{damping factor}. With probability $1-\beta$, the random walk is teleported back to node $g$. The rank score $s$ satisfies the following equation:
\begin{equation}
 s = \beta Ws + (1- \beta)\theta,
 \label{pagerank}
 \end{equation}
  where $W$ is the weighted transition matrix with $W_{ij} = P_{ji}$. 
 
 So we have,
 \begin{equation}
s = \big(\beta W + (1 - \beta)\theta\textbf{1}^T\big)s := Ms
 \end{equation}
Hence the rank score is the \emph{principal eigenvector} of $M$, which can be computed by iterations fast and easily via Algorithm \ref{ag:pr}.

\begin{algorithm}
$s^{(0)}_i  \leftarrow \frac{1}{v}$ for all $i$\; 
$t = 1$\; 
\While{ $|s^{(t)} - s^{(t-1)}| < \epsilon$} {
	\For {$i = 1$ to $v$}{
	$s^{(t)}_i = \sum_{j = 1}^v \beta W_{ij}s^{(t-1)}_i + (1 - \beta)\theta_i$\;
	}
	$t \leftarrow t + 1$\;
}
\caption{Iterative computation of rank score}
\label{ag:pr}
\end{algorithm}

The rank score $s$ can be interpreted as the importance of other nodes to the target group $g$. It is easy to see that we can increase the rank score by shortening the distance, adding more paths, or increasing the weight on the path to $g$. These are desirable properties in a recommendation system. For example, even if item $i$ is not directly connected with $g$, but it is in a category to which many of $g$'s top-$k$ items belong, then $i$ is very likely to have a high rank score. Or if group $g$ and $g'$ have many overlapping top-$k$ items, $g'$ will have high rank, so we can use $g'$'s top-$k$ list to make recommendations and predictions for $g$.


\subsubsection{Recommendations}

\textbf{Direct Method:}
Solving Equation (\ref{pagerank}) iteratively, we obtain a rank score for all nodes of the recommendation graph $G$. Since the rank score represents the importance to the target group, we can then separate and sort them according to the categories, i.e. groups $\mathcal{G}$, items $\mathcal{I}$, tags $\mathcal{T}$, etc.  Sorted items form a recommendation list to the target group $g$, and we can compute the recommendation for every group.

\textbf{User-based Prediction}
For items above the group popularity threshold, we simply take the average rating of group members as the rating prediction. For other items, we can use rank score as an influence measure to make predictions, which is similar to memory-based collaborative filtering, using \emph{Pearson Correlation} \cite{Resnick} as a similarity measure between users and items. Given the rank score of the group set $\mathcal{G}$, we take the weighted sum of the groups' ratings on item $i$ as a prediction for the target group $g$,  as shown below:
\begin{equation}
\hat{r}_{gi}^{user} = \frac{\sum_{x\in G_i}s_x(\bar{r}_{xi} - \bar{r}_x)}{\sum_{x\in G_i}s_x} + \bar{r}_x.
\label{user}
\end{equation} 
$G_i$ is the set of groups for which item $i$ is above the popularity threshold. $s_x$ is the target group's personalized rank score of group $x$.  

\textbf{Item-based Prediction}
As above, in order to perform an item-based recommendation, we can use the rank score of item set $\mathcal{I}$ as weight to predict the rating of the item $i$ for the target group $g$, if the popularity of the item is below the threshold.  Specifically,
\begin{equation}
\hat{r}_{gi}^{item} = \frac{\sum_{j\in I_g}s_jr_{gj}}{\sum_{j\in I_g}s_j}.
\label{item}
\end{equation} 
In Equation (\ref{item}), we use $u$'s rating on similar items to predict the rating on $i$. $s_j$ is the target group's personalized rank score of item $j$. 

After a recommendation is made, results are returned to individual users. Items that have been rated by the user, which are stored locally, are then removed from the recommendation list.

\section{Experiments and Evaluation}
\label{experiment}
\subsection{Dataset}
In order to evaluate the performance of the proposed algorithm, we run experiments on the MovieLens and Epinions dataset, both of which are widely used benchmarks for recommendation systems. The MovieLens dataset consists of 1,682 movies and 943 users. Movies are labeled by 19 genres. User profile information such as age, gender, and occupation is also available. In order to evaluate the group-based recommendation system, we take user profile categories provided in the dataset as groups. In the experiments, we group users in three different ways, namely, gender, age, and occupation. Detailed group category distribution is as follows:
\begin{itemize}
\item \textbf{Gender}: male (71.16\%) and female (28.84\%).
\item \textbf{Age}: below 21, 21 to 30, 31 to 40, 41 to 50, above 50, indexed from 1 to 5, respectively, as shown in Fig.\ref{fig: ga}.
\item \textbf{Occupation}: administrator, artist, doctor, educator, engineer, entertainment, executive, healthcare, homemaker, lawyer, librarian, marketing, none, other, programmer, retired, salesman, scientist, student, technician and writer, indexed from 1 to 21, respectively, as shown in Fig.\ref{fig:oc}.
\end{itemize}

Epinions is a website where users can post their reviews and ratings (1-5) on a variety of items (songs, softwares, TVs, etc.), as long as user's  \emph{web of trust}, i.e. ``reviewers whose reviews and ratings they have consistently found to be valuable" \cite{Massa}. We randomly select 946 items, 304 users and their trust network from Epinion dataset to perform the experiments. Using the community detection techniques in \cite{blondel2008fast}, we detected 18 groups based on the trust network.


\begin{figure*}[!t] 
\centering
\subfloat[Percentage of the population of 5 different age categories.  ]{\label{fig: ga}\includegraphics[width=0.48\textwidth]{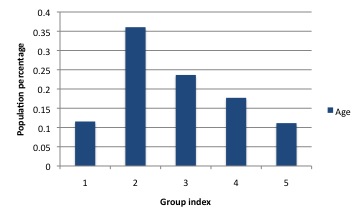}}             
\subfloat[Percentage of the population of 21 different occupation categories.]{\label{fig:oc}\includegraphics[width=0.48\textwidth]{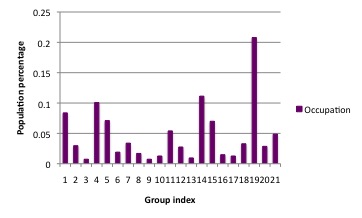}}
 \caption{The group distribution of MovieLens datasets.}
 \label{fig:group}
\end{figure*}

\subsection{Experimental Methodology and Results}
\label{sec:method}
We evaluate our results with two popular evaluation metrics for top-$k$ recommendations: percentile and recall. 

\emph{Percentile}: The individual percentile score is simply the average position (in percentage) that an item in the test set occupies in the recommendation list. For example, if four items are ranked 1st, 9th, 10th and 20th in a recommendation list consisting of 100 items, with individual percentile scores of 0.01, 0.09, 0.10 and 0.20. The average percentile of the system is 0.1. A lower percentile indicates a better prediction.

\emph{Recall}: Given a recommendation test, we consider any item in the top-$k$ recommendations that matches any item in the test set as a ``hit", as in \cite{Tso}. 
\begin{equation}
recall(k) = \frac{\#\textrm{hits of top-k}}{T},
\end{equation} 
where $T$ is the size of test set. A higher recall value indicates a better prediction.

In this experiment,  all items in the test set $T$ are rated 5 (highest rating) by users, thus we can consider them as relevant items for recommendation.  The recommendation list has a length of 900 items for MovieLens dataset and 500 for Epinions dataset. The top-500 movies in the aggregated group preference list are used to construct the recommendation graph for MovieLens and top-300 for Epinions. Note that the popularity threshold of the recommendation system can be decided by users, since different groups may have a different requirement for popularity. In our experiment, we set the popularity threshold at 0.01.

We compare the proposed method with two state-of-art personalized recommendation systems:  L+ \cite{Fouss} and ItemRank \cite{Gori}. L+ suggested a dissimilarity measure between nodes of a graph, the expected commute time between two nodes, which the authors applied to recommendation \cite{Fouss}.  Specifically, they constructed a non-directed bipartite graph where users and movies form the nodes. A link is placed between a user and movie if the user watched that movie. Movies are then ranked in ascending order according to the average commute time to the target node. ItemRank built the recommendation graph by only using movies as nodes. In \cite{Gori}, two nodes are connected if at least one user rated both nodes. The weight of the edge is set as the number of users who rated both of the nodes. A random-walk based algorithm is then used to rank items according to the target user's preference record. In order to see how much information is lost by grouping users, we also compare the proposed privacy-preserving recommendation algorithm with a recommendation graph of similar structure, but with all the individual rating information, where nodes of the recommendation graph are formed by users, items, user social profile information (gender, age and occupation). The weight of an edge between users and items is given by
\begin{equation}
w_{ui} = w_{iu} = \exp\left({\frac{r_{ui} - \bar{r}_u}{\sqrt{\sum_{i\in I_{u}}(r_{ui} - \bar{r}_u)^2}}}\right),
\end{equation}
\begin{equation}
\bar{r}_u := \frac{\sum_{i\in I_u}r_{ui}}{|I_u|}.
\end{equation}
where $I_{u}$ denotes the set of items which user $u$ has rated. Note that a larger edge weight indicates more chance that the random walk passes through that edge. If user $u$'s rating on item $i$ $r_{ui}$ is lower than the average rating $\bar{r}_u$, $w_{ui}$ and $w_{iu}$ are less than 1; otherwise are greater than 1. The assignment of weights do not depend on the variance of the user's ratings.

Experimental results of cross-validation on percentile scores of the MovieLens dataset are shown in Table 1. We create five training/testing splits. Although it does not utilize knowledge of individual's preference information, the proposed group-based privacy preserving recommendation algorithm still has a better performance than L+ and ItemRank in both datasets, which are two state-of-art personalized recommendation methods. And as expected, due to the absence of personal rating information, the performance of the proposed group method is inferior to \emph{personal recommendation}, i.e., recommendations with individual rating information. It is worth noting that in MovieLens dataset, among all three different ways of grouping users, grouping by occupation outperforms the other two grouping methods, which shows the promise of group-based recommendation system with finer groups. Moreover, in order to evaluate the effectiveness of groups in MovieLens dataset, we did contrast experiments on random groups, which are users divided randomly into 2, 5, 21 groups to compare with gender, age and occupation groups. Experimental results show that the natural groups outperform the random groups, as shown in Table 1. We also randomly assigned users of Epinions dataset to 18 groups. Surprisingly, the random groups perform slightly better than clusters from community detection. However, this result agrees with the experiments on personal recommendation in Table 1, the percentile of the personal recommendation without social network information is better than that with social network information.
 
  \begin{table}[!t]
\begin{center}
\caption{Average percentile results obtained by 5-fold cross-validation for recommendation.} 
\begin{tabular}{llll}
\hline
\multicolumn{2}{ c }{MovieLens}&\multicolumn{2}{ c }{Epinions} \\
\hline
Methods & Percentile & Methods & Percentile\\
\hline
L+ &  0.1157 & L+ & 0.4023\\
ItemRank &   0.1150 &ItemRank  & 0.4156\\
Personal Recom. w/ SN info  & \textbf{0.0790}  &Personal Recom w/ SN & 0.2444\\
Personal Recom. w/o SN info  & 0.0813  &Personal Recom w/o SN & \textbf{0.2311}\\
Group by Gender &  0.1110 & Group by comm. detection & 0.3752\\
Group by Age & 0.1066 & Random 18 groups & \textbf{0.3689}\\  
Group by Occupation  & \textbf{0.1060}& &\\
Random 2 Groups & 0.1172 & & \\
Random 5 Groups & 0.1149 & & \\
Random 21 Groups & 0.1104& & \\
\hline
\end{tabular}
\end{center}
\label{tb:percentile}
\end{table}

We also perform 5-fold cross-validation experiments for recall values, as shown in Table 2 and Table 3. In real settings, a user is unlikely to browse a very long recommendation list.  Thus, we only test the top-5 to top-50 recall values. As introduced in Section \ref{sec:method}, a recall value of $k$ is the probability that an item in the test set hits the top-$k$ items recommended by the system. A higher recall value means a higher chance that items in the test set appear in the top-$k$ list. Since these items all have the highest ratings, a higher recall value indicates better performance of the recommendation algorithm. Table 2 shows the results from MovieLens dataset. \emph{Personal recommendation},  our proposed algorithm with individual preference information, trading privacy for quality, has the best performance. Otherwise, L+ has better performance on top-5 recall, and the recommendation system based on occupation groups outperforms gender and age groups, and also has a higher recall value than L+ and ItemRank for top-10 to top-50 recommendations. Similar results from Epinions dataset are shown in Table 3.

 \begin{table*}[!t]
\begin{center}
\caption{Average recall results obtained by 5-fold cross-validation for recommendation on MovieLens dataset.} 
\begin{tabular}{lcccccccccc}
\hline
Methods & Top-5 & Top-10 & Top-15 &Top-20 & Top-25 & Top-30 & Top-35 & Top-40 & Top-45 & Top-50\\
\hline
L+ &  \textbf{0.157} & 0.234 & 0.278 & 0.317 & 0.352 & 0.377 & 0.412 & 0.435 & 0.4601 & 0.481\\
ItemRank &   0.169 & 0.233 & 0.285 & 0.335 & 0.379 & 0.408 & 0.436 & 0.458 & 0.484 & 0.504\\
Personal Recommendation   &  \textbf{0.219} & \textbf{0.303} & \textbf{0.348} & \textbf{0.416} & \textbf{0.460} & \textbf{0.491} & \textbf{0.514} & \textbf{0.546} & \textbf{0.571} & \textbf{0.591}\\
Group by Gender &  0.104 & 0.166 & 0.244 & 0.313 & 0.366 & 0.408 & 0.442 & 0.470 & 0.489 & 0.510\\
Group by Age & 0.126 & 0.228 & 0.286 & 0.333 & 0.377 & 0.411 & 0.442 & 0.469 & 0.492 & 0.514\\
Group by Occupation  &0.149 & \textbf{0.240} & \textbf{0.305} & \textbf{0.348} & \textbf{0.386} & \textbf{0.421} & \textbf{0.449} & \textbf{0.473} & \textbf{0.496} & \textbf{0.518}\\
\hline
\end{tabular}
\end{center}
\label{tb:recall}
\end{table*}

\begin{table*}[!t]
\begin{center}
\caption{Average recall results obtained by 5-fold cross-validation for recommendation on Epinions datasets.} 
\begin{tabular}{lcccccccccc}
\hline
Methods & Top-5 & Top-10 & Top-15 &Top-20 & Top-25 & Top-30 & Top-35 & Top-40 & Top-45 & Top-50\\
\hline
L+ 		&  0.041 &0.074 & 0.098 & 0.125 & \textbf{0.159} & 0.180 & 0.202 & 0.219 & 0.237 & 0.248\\
ItemRank &   0.042 & \textbf{0.080} & 0.105 & 0.127 & 0.154 & 0.179 & 0.196 & 0.219 & 0.232 & 0.245\\
Personal Recommendation   &  \textbf{0.093} & \textbf{0.158} & \textbf{0.181} & \textbf{0.219} & \textbf{0.246} & \textbf{0.270} & \textbf{0.297} & \textbf{0.318} & \textbf{0.345} & \textbf{0.361}\\
Group by Community Detection  &\textbf{0.045} & 0.075 & \textbf{0.108} & \textbf{0.129} & 0.158 & \textbf{0.182}& \textbf{0.205} & \textbf{0.220} & \textbf{0.239} & \textbf{0.250}\\
\hline
\end{tabular}
\end{center}
\label{tb:recall}
\end{table*}

\section{Conclusions}
\label{conclusion}
In this paper, we present a framework for group-based privacy preserving recommendation systems. We introduce the novel idea of using groups as a natural protective mechanism to preserve individual users' private preference data from the central service provider. A distributed peer-to-peer preference exchange process is designed to provide anonymity of group members. We also introduce a hybrid recommendation model based on random walks. It incorporates item content and group social information to make recommendations for groups. Personalized recommendations are made locally to group members, so that no user preference profile is leaked to the service provider. Experimental results on MovieLens and Epinions datasets show that the proposed algorithm outperforms the baseline algorithms L+ and ItemRank, despite the absence of personal preference information. Thus, using our group-based method, we can obtain excellent recommendation performance while simultaneously preserving privacy.

\bibliographystyle{abbrv}
\bibliography{myrefs}

\end{document}